\newtheorem{theorem}{Theorem}
\newtheorem{prop}{Proposition}
\begin{document}

\title{On NP-completeness of the cell formation problem}

\author{
\name{Mikhail V. Batsyn\textsuperscript{a}$^{\ast}$\thanks{$^\ast$Corresponding author. Email: mbatsyn@hse.ru},
Ekaterina K. Batsyna\textsuperscript{b},
Ilya S. Bychkov\textsuperscript{a}
}
\affil{\textsuperscript{a}Laboratory of Algorithms and Technologies for Network Analysis\\
\textsuperscript{b}Department of Applied Mathematics and Informatics\\
National Research University Higher School of Economics, \\25/12 B. Pecherskaya, Nizhny Novgorod, Russian Federation, 603155}
\received{v1.1 released Dec 2018}
}

\maketitle

\begin{abstract}
In the current paper we provide a proof of NP-completeness for the Cell Formation Problem (CFP) with the fractional grouping efficacy objective.
For this purpose we first consider the CFP with the linear objective minimizing the total number of exceptions and voids.
Following the ideas of \citet{Pinheiro} we show that it is equivalent to the Bicluster Graph Editing Problem (BGEP), which is known to be NP-complete \citep{Amit}.
Then we suggest a reduction of the CFP problem with the linear objective function to the CFP with the grouping efficacy objective.
\end{abstract}

\begin{keywords}
cell formation problem; bicluster graph editing problem; grouping efficacy; np-complete
\end{keywords}

\section{Introduction}
The Cell Formation Problem (CFP) consists in optimal grouping of machines together with parts processed on them into manufacturing cells.
The goal of such a bi-clustering (clustering of both machines and parts) is to minimize the inter-cell movement of parts between different cells during the manufacturing process and to maximize the loading of machines with parts processing inside their cells.
The input to this problem is given by a binary machine-part matrix defining for every machine what parts are processed on it.
In terms of input matrix the objective of the CFP is to partition rows (machines) and columns (parts) of the input matrix into rectangular cells minimizing the number of ones outside cells, called exceptions (representing the inter-cell movements of parts), and minimizing the number of zeroes inside cells, called voids (reflecting the underloading of machines).
An example of the input matrix is shown in Table \ref{ex} and a feasible solution for this instance is shown in Table \ref{sol}.

A number of papers on the CFP are devoted to its simplest formulation, called Machine Partitioning Problem (MPP), in which only machines are clustered into cells and the objective is computed as an explicit function from this partition and machine-part matrix \citep{22, 23, 21}.
Though we are not aware of the proof of NP-completeness for the MPP, we believe it exists in literature.
It is probably present in the PhD thesis of \citet{Ballakur}, judging by the references to this work. Unfortunately we have failed to find it in electronic databases.
Besides \citet{Ghosh} states that the NP-hardness of the MPP can be proved "by a straightforward reduction of the clustering problem \citep{Garey}" to the MPP.

The CFP problem becomes much harder when we want to cluster machines and parts together into biclusters.
In spite of the fact that most papers in last decades consider the CFP in its biclustering formulation, there are no papers providing the proof of its NP status to the best of our knowledge.
There is a big number of papers, where authors just write that the problem is NP-hard \citep{Mak, Goncalves, Chan}.
Other authors including \citet{Tunnukij, Elbenani} state that the CFP is NP-hard citing the paper of \citet{Dimopoulos}. But \citet{Dimopoulos} only mention that "the cell-formation problem is a difficult optimization problem".

Many papers including \citet{James, Chung, Paydar, Solimanpur, Utkina} refer to \citet{Ballakur_Steudel} when writing about the NP-hardness of the CFP.
However \citet{Ballakur_Steudel} present a heuristic for the CFP with different objective functions and do not state anything about the NP status of these CFP formulations.
Finally there are some papers citing \citet{Ballakur} PhD thesis where several CFP formulations are considered.
However this paper is not available in any electronic publication databases.
According to the existing references to this thesis and other papers of Ballakur we can only conclude that he considers the machine partitioning and machine-part partitioning problems with some objective functions, but not with the grouping efficacy function introduced later by \citet{Kumar}.
At the same time the grouping efficacy is currently widely accepted and considered as the best function successfully joining the both objectives of inter-cell part movement minimization and inta-cell machine loading maximization.

In the current paper we provide a proof of NP-completeness for the CFP problem with the fractional grouping efficacy objective.
For this purpose we first consider the CFP with the linear objective minimizing the total number of exceptions and voids.
Following the ideas of \citet{Pinheiro} we show that it is equivalent to the Bicluster Graph Editing Problem (BGEP), which is known to be NP-complete \citep{Amit}.
Then we suggest a reduction of the CFP problem with the linear objective function to the CFP with the grouping efficacy objective.

\section{Problem formulation}
In the CFP we are given $m$ machines, $p$ parts processed on these machines, and $m \times p$ Boolean matrix $A$ in which $a_{ij} = 1$, if machine $i$ processes part $j$ during the production process, and $a_{ij} = 0$ otherwise.
We should cluster both machines and parts into biclusters, called cells, so that for every part we minimize simultaneously the number of processing operations of this part on machines from other cells and the number of machines from the same cell which do not process this part.
Thus we minimize the movement of parts to other cells (inter-cell operations) and maximize the loading of machines with processing operations inside cells (intra-cell operations) during the production process.
In other words we need to choose machine-part cells in matrix $A$, such that the number of ones outside these cells (called exceptions) is minimal possible and at the same time the number of zeroes inside these cells (called voids) is also minimal possible.
The objective function which provides a good combination of these two goals and is widely accepted in literature is the grouping efficacy suggested by \citet{Kumar}:
\begin{equation}
f = \frac{n_1 - e}{n_1 + v} \to \max
\end{equation}
Here $n_1$ is the number of ones in the input matrix, $e$ is the number of exceptions (ones outside cells), $v$ is the number of voids (zeroes inside cells).

Below we present a straightforward fractional programming model for the CFP (\citet{Utkina1}, \citet{Bychkov}).
Since the number of cells cannot be greater than the number of machines and the number of parts, then the maximal possible number of cells is equal to $\min(m, p)$. We denote this value as $c = \min(m, p)$.\\
Decision variables:
\begin{equation}
x_{ik} =
    \begin{cases}
    1 & \text{if machine } i \text{ is assigned to cell } k\\
    0 & \text{otherwise}\\
    \end{cases}
\end{equation}
\begin{equation}
y_{jk} =
    \begin{cases}
    1 & \text{if part } j \text{ is assigned to cell } k\\
    0 & \text{otherwise}\\
    \end{cases}
\end{equation}
\begin{equation}
e = n_1 - \sum_{k=1}^{c} \sum_{i=1}^{m} \sum_{j=1}^{p} a_{ij}x_{ik}y_{jk}
\end{equation}
\begin{equation}
v = \sum_{k=1}^{c} \sum_{i=1}^{m} \sum_{j=1}^{p} (1-a_{ij})x_{ik}y_{jk}
\end{equation}
Objective functions:
\begin{subequations}
  \begin{align}
     \label{f1}
     f_1 = e + v \to \min \\
     \label{f2}
     f_2 = \frac{n_1 - e}{n_1 + v} \to \max
  \end{align}
\end{subequations}
Constraints:
\begin{equation}
\label{re_x}
\sum_{k=1}^{c} x_{ik} = 1 \quad \forall i = 1,\ldots,m
\end{equation}
\begin{equation}
\label{re_y}
\sum_{k=1}^{c} y_{jk} = 1 \quad \forall j = 1,\ldots,p
\end{equation}
Objective function \eqref{f1} minimizes the number of exceptions and voids and objective function \eqref{f2} maximizes the grouping efficacy.
Assignment constraints \eqref{re_x} and \eqref{re_y} provide that all machines and parts are partitioned into disjoint cells.

\begin{table}
\centering
        \begin{tabular}{ c|ccccccc|}
                  \multicolumn{1}{c}{~} & \multicolumn{1}{c}{1} & \multicolumn{1}{c}{2} & \multicolumn{1}{c}{3} & \multicolumn{1}{c}{4} & \multicolumn{1}{c}{5} & \multicolumn{1}{c}{6} & \multicolumn{1}{c}{7} \\
                   \hhline{~-------}
                  1 & 1 & 0 & 1 & 0 & 0 & 1 & 1 \\
                  2 & 1 & 1 & 1 & 0 & 0 & 0 & 0 \\
                  3 & 1 & 0 & 1 & 0 & 1 & 0 & 1 \\
                  4 &1 &1 &0 &1 &1 &0 &1 \\
                  5 &1 &1 &1 &1 &1 &0 &0 \\
                  \hhline{~-------}
        \end{tabular}
        \caption{CFP instance}
\label{ex}
\end{table}

\begin{table}
\centering
        \begin{tabular}{ c|ccccccc|}
                  \multicolumn{1}{c}{~} & \multicolumn{1}{c}{1} & \multicolumn{1}{c}{2} & \multicolumn{1}{c}{3} & \multicolumn{1}{c}{4} & \multicolumn{1}{c}{5} & \multicolumn{1}{c}{6} & \multicolumn{1}{c}{7} \\
                   \hhline{~-------}
                  2 &\cellcolor{yellow}1 &\cellcolor{yellow}1 &\multicolumn{1}{c|}{\cellcolor{yellow}1} &0 &0 &0 &0 \\
                  4 & \cellcolor{yellow}1  & \cellcolor{yellow}1 &\multicolumn{1}{c|}{\cellcolor{yellow}0} &1 &1  &0 &1 \\
                  5 &\cellcolor{yellow}1 &\cellcolor{yellow}1 &\multicolumn{1}{c|}{\cellcolor{yellow}1} &1 &1 &0 &0 \\
                   \hhline{~-----}
                  3 &1 &0 &1 &\multicolumn{1}{|c}{\cellcolor{yellow}0} &\multicolumn{1}{c|}{\cellcolor{yellow}1} &0 &1 \\
                   \hhline{~~~~----}
                  1 &1 &0 &1 &0 &0 &\multicolumn{1}{|c}{\cellcolor{yellow}1} &\multicolumn{1}{c|}{\cellcolor{yellow}1} \\
                  \hhline{~-------}
        \end{tabular}
        \caption{CFP solution}
\label{sol}
\end{table}

\section{NP-completeness}
To prove the NP-completeness of the CFP with linear objective \eqref{f1} we use the Bicluster Graph Editing Problem (BGEP).
The first authors who have noticed the closeness of the CFP and BGEP problems are \citet{Pinheiro}.
They applied it in their exact algorithm for the CFP with the grouping efficacy objective.

\begin{figure}[h]
\centering
\includegraphics[scale=1.0]{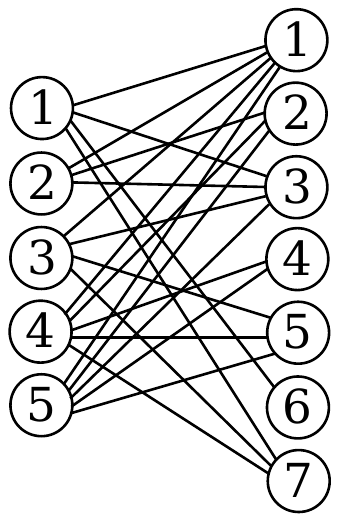}
\caption{BGEP instance}
\label{bgep}
\end{figure}

The BGEP problem consists in determining the minimum number of edges which should be added to/removed from the given bipartite graph so that it transforms to a set of isolated bicliques.
An example of a BGEP instance is presented in Figure \ref{bgep} and its solution -- in Figure \ref{bgep_sol}.
Here dotted thick lines show the added edges and red thin lines -- the removed edges.
The BGEP problem is NP-complete.
To be more exact -- its decision version is NP-complete.
The decision version of an optimization problem with objective function $f \rightarrow \max$ ($f \rightarrow \min$) is a problem with the same constraints, which only answers the question, whether there exists a feasible solution with $f \ge c$ ($f \le c$) for any given constant $c$.
Since the theory of NP-completeness is applicable only for decision problems in all the propositions and theorems below we will talk about the decision versions of the problems.

\begin{figure}[h]
\centering
\includegraphics[scale=1.0]{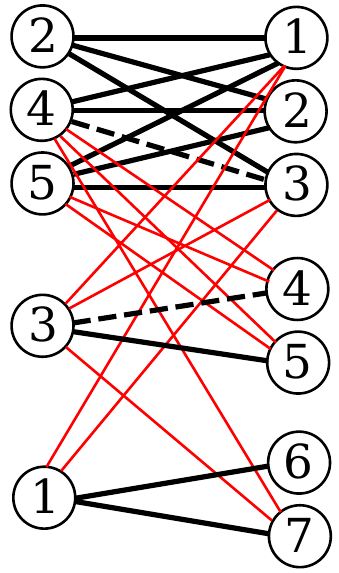}
\caption{BGEP solution}
\label{bgep_sol}
\end{figure}

\begin{theorem}[\citet{Amit}]
The BGEP problem is NP-complete because the NP-complete 3-exact 3-cover problem can be polynomially reduced to BGEP.
\end{theorem}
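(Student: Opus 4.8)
\emph{Proof proposal.} I would first check membership in NP and then build a polynomial-time many-one reduction from the 3-exact 3-cover problem, whose NP-completeness is classical \citep{Garey}. Membership is routine: a certificate for an instance $(G,k)$ is the list of at most $k$ vertex pairs whose adjacency is flipped; from it one reconstructs the edited graph $G'$ and checks in polynomial time that every connected component of $G'$ is a complete bipartite graph (equivalently, that the bipartite graph $G'$ has no induced path on four vertices). So BGEP is in NP.

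For the reduction I would start from a 3-exact 3-cover instance -- a ground set $X$ with $|X|=3q$ and a family $\mathcal{C}$ of $3$-element subsets of $X$ in which every element occurs in exactly three subsets -- and construct a bipartite graph $H$. Its left part would contain one vertex $u_x$ for each $x\in X$, together with a bounded number of auxiliary ``penalty'' vertices; its right part would contain, for every triple $C\in\mathcal{C}$, a gadget $B_C$ of $N$ vertices with identical neighbourhoods, each adjacent to the three vertices $u_x$ with $x\in C$. The multiplicity $N$ would be chosen polynomially in the instance size and large enough that no optimal edit set splits a gadget or merges two gadgets. The penalty vertices and their incident edges would be arranged so that \emph{activating} a gadget (retaining at least one of its membership edges) costs a fixed amount that does not depend on how many of its three elements are retained. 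Finally I would set the budget $k$ to the (polynomially computable) edit count of the intended solution described next.

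For the forward direction: given an exact cover $\mathcal{C}'\subseteq\mathcal{C}$ (necessarily with $|\mathcal{C}'|=q$), delete for each $x\in X$ all membership edges of $u_x$ except those to the gadget of the unique triple of $\mathcal{C}'$ containing $x$, leave every activated gadget together with its retained elements (and the relevant penalty vertices) as a biclique, and leave every other gadget as isolated vertices; a direct count gives exactly $k$ edits, so $(H,k)$ is a yes-instance. The backward direction is where the work lies. Assuming an edit set of size at most $k$ that produces disjoint bicliques, I would argue, using that $N$ is large, that an optimal edited graph respects gadget boundaries, hence each element retains membership edges to exactly one triple; this collection of triples covers $X$, and each of them pays its fixed activation charge. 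Since every cover of $X$ by $3$-element sets uses at least $q$ sets, and a $q$-set cover is automatically exact, a counting argument then forces the number of activated gadgets to be exactly $q$ -- i.e. the cover is exact. Hence the 3-exact 3-cover instance is a yes-instance, which completes the reduction.

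The hard part is the design of the penalty gadget together with the exact choice of $N$ and $k$: the auxiliary structure must at once make gadget-respecting solutions strictly optimal, make the cost of such a solution depend only on the number of activated gadgets (and not on the distribution of elements among them), and keep every size polynomial in $q$ and $|\mathcal{C}|$. This is exactly the construction carried out by \citet{Amit}; with it in hand, both implications above become routine edit-count bookkeeping.
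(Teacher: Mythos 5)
The statement you are trying to prove is not actually proved in the paper at all: Theorem~1 is an attributed citation, and the authors rely on \citet{Amit} as a black box. Your proposal correctly identifies the two components such a proof must have (membership in NP via the certificate/edit-set check --- the $P_4$-free characterization of bicluster graphs is indeed the right tool --- and a polynomial reduction from 3-exact 3-cover, which is the same source problem the cited theorem names), so in spirit you are on the same route as the cited work.

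However, as a self-contained proof your text has a genuine gap, and you name it yourself: the entire technical content lies in the penalty/multiplicity gadget, the choice of $N$ and of the budget $k$, and the backward-direction argument, and none of these are actually constructed or verified. The phrases ``would be arranged so that activating a gadget costs a fixed amount'' and ``I would argue, using that $N$ is large, that an optimal edited graph respects gadget boundaries'' are exactly the claims that need proof: without an explicit gadget one cannot check that the activation cost is independent of how many of the three elements a chosen triple retains, nor that no element ends up retaining membership edges to zero triples, nor that merging or splitting gadgets is strictly more expensive than any gadget-respecting solution within the budget $k$. Deferring all of this to ``exactly the construction carried out by \citet{Amit}'' reduces your proposal to the same citation the paper already makes, so it does not add a proof where the paper has none. (Two minor points: the paper's statement of 3-exact 3-cover allows each element to lie in \emph{at most} three triples, whereas you assume exactly three; and your forward direction silently assumes every element of $X$ is covered by some triple, which the reduction must also handle.)
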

The 3-exact 3-cover problem is defined as follows. Given a set of elements $U = \{1, 2, ..., 3n\}$ and a collection $C$ of triplets of these elements, such that each element can belong to at most 3 triplets, determine if there exists a subcollection of $C$ with size $n$ which covers $U$.

Hereafter we will call CFP 1 the CFP problem with the linear objective function $f_1 = e + v$ \eqref{f1}, and CFP 2 -- the CFP problem with the grouping efficacy objective $f_2 = (n_1 - e) / (n_1 + v)$ \eqref{f2}.

\begin{theorem}
The CFP with linear objective $f_1 = e + v$ (CFP 1) is NP-complete since it is equivalent to the BGEP problem.
\end{theorem}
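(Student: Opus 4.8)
The plan is to prove the decision versions of CFP~1 and BGEP polynomially inter-reducible, through a single dictionary between biclusterings of a $0/1$ matrix and edge-editings of a bipartite graph, and then to combine this with the NP-completeness of BGEP and the (routine) membership of CFP~1 in NP. Membership in NP is immediate: an assignment of the variables $x_{ik}$, $y_{jk}$ is a polynomial-size certificate whose feasibility with respect to \eqref{re_x}--\eqref{re_y}, and whose value $f_1 = e + v$ of \eqref{f1}, can be checked in time polynomial in $m$ and $p$, so one can verify ``is there a feasible solution with $f_1 \le K$?''.

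For the reductions I would attach to a machine-part matrix $A$ the bipartite graph $G_A$ on the vertex set $M \cup P$ of its $m$ machine-vertices and $p$ part-vertices, with edge set $\{\, m_i p_j : a_{ij} = 1 \,\}$; this is a polynomial-time bijection between CFP~1 instances and BGEP instances. The key point is that a feasible CFP solution is literally a partition of $V(G_A) = M \cup P$ into groups (the cells), and that replacing $G_A$ by the disjoint union of the complete bipartite graphs spanned by these groups is exactly a bicluster-graph editing: the edges it deletes are precisely the ones of $A$ lying in no single cell, i.e.\ the $e$ exceptions, and the edges it inserts are precisely the zeros of $A$ lying inside a cell, i.e.\ the $v$ voids. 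Hence the number of edited edges equals $e + v = f_1$, and the same identity reads backwards: any editing of $G_A$ into a disjoint union of bicliques yields the biclustering of $A$ whose cells are the vertex sets of the bicliques, with the same objective. Therefore $G_A$ can be edited with at most $K$ operations if and only if $A$ admits a biclustering with $f_1 \le K$, which furnishes both reductions; since BGEP is NP-complete and CFP~1 is in NP, CFP~1 is NP-complete.

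The step I expect to require the most care is making the two families of admissible partitions match exactly: a one-sided cell (machines only, or parts only) or an empty cell must correspond on the graph side to a block of isolated vertices, respectively to an empty block, so one must fix mutually compatible conventions for such degenerate groups and check that the stated bound $c = \min(m,p)$ on the number of cells discards no optimal partition of the dictionary. This is settled by the remark that all one-sided groups on the same side can be merged into one with no change in objective --- there being no edge between two vertices on the same side --- so that an optimum of either problem is attained using at most $\min(m,p)$ two-sided groups together with at most one one-sided group of each kind; if one wants this slack to sit literally under the $\min(m,p)$ bound it suffices to reduce from BGEP instances padded with a bounded number of isolated dummy rows and columns. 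With the conventions pinned down the dictionary is a genuine value-preserving bijection between solutions, and the equivalence, hence the NP-completeness of CFP~1, follows as above.
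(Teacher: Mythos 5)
Your proposal is correct and follows essentially the same route as the paper: the identical dictionary in which machines and parts become the two sides of a bipartite graph, exceptions become deleted edges, voids become added edges, so $f_1 = e + v$ equals the number of edits, giving the equivalence with BGEP and hence NP-completeness via \citet{Amit}. The only difference is that you spell out NP membership and the handling of degenerate one-sided cells and the $c=\min(m,p)$ bound, points the paper's proof passes over implicitly.
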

\begin{proof}
There is a one-to-one correspondence between these two problems.
Every machine in the CFP corresponds to a vertex in one part of the bipartite graph in the BGEP, and every part in the CFP corresponds to a vertex in another part of this graph.
The machine-part matrix in the CFP coincides with the bipartite graph biadjacency matrix in the BGEP.
Every exception in a solution of the CFP corresponds to an edge which should be removed from the bipartite graph in the BGEP in order to transform it to a set of isolated bicliques.
And every void in a CFP solution corresponds to an edge which should be added to the bipartite graph in the BGEP.

It is clear that the CFP $f_1 = e + v \rightarrow \min$ objective is equivalent to the BGEP objective of minimizing the number of added / removed edges needed to transform the input bipartite graph to a set of isolated bicliques.
Every biclique corresponds to a rectangular cell in the CFP.
If we remove the added edges and return back the removed ones then every isolated clique will become a non-isolated quasi-biclique completely coinciding with a rectangular cell in a CFP solution.
Thus the CFP 1 problem is equivalent to the BGEP problem and it is NP-complete.
\end{proof}

For example, rows (machines) 2, 4, 5, 3, 1 in Table \ref{sol} correspond to vertices 2, 4, 5, 3, 1 in the left part of the bipartite graph in Figure \ref{bgep_sol} and columns (parts) 1, ..., 7 correspond to vertices 1, ..., 7 in the right part of this graph.
The solution of this BGEP instance contains 3 bicliques shown with thick lines in Figure \ref{bgep_sol}.
Here two dashed lines represent two edges which should be added to the graph to form bicliques.
Red thin lines show the edges which should be removed from the graph to isolate the bicliques from each other.

To prove the NP-completeness of the CFP 2 problem we suggest the reduction of CFP 1 problem to it.
The CFP 2 objective can be written in the following way.
\[
f_2 = \frac{n_1 - e}{n_1 + v} = 1 - \frac{e + v}{n_1 + v} \rightarrow \max \quad \Leftrightarrow \quad \frac{e + v}{n_1 + v} \rightarrow \min
\]
This expression is almost equivalent to the linear objective of the CFP 1, except the value of $v$ in the denominator.
Our idea is to nullify the influence of this value by significant increasing of the number of ones $n_1$.
We reduce the CFP 1 problem to CFP 2 by extending the original machine-part matrix $A$ with a big block of ones as it is shown in Table \ref{ext}.
For example, for the CFP 1 instance shown in Table \ref{ex} the extended matrix $\tilde{A}$ will be as shown in Table \ref{ext1}.
Before the main theorem using the suggested reduction and stating the NP-completeness of CFP 2 we will need to prove two propositions first.

\begin{table}
\centering
        \begin{tabular}{|ccc|ccc|}
                  \hhline{------}
                  &    &  & & & \\
                  & A & & & 0 & \\
                  &    & & & & \\
                  \hhline{------}
                  &    & & 1& ... & 1 \\
                  & 0 & & \vdots & & \vdots\\
                  &    & & 1 & ... & 1 \\
                  \hhline{------}
        \end{tabular}
        \caption{Extended matrix $\tilde{A}$}
\label{ext}
\end{table}

\begin{table}
\centering
        \begin{tabular}{ c|cccccccccc|}
                  \multicolumn{1}{c}{~} & \multicolumn{1}{c}{1} & \multicolumn{1}{c}{2} & \multicolumn{1}{c}{3} & \multicolumn{1}{c}{4} & \multicolumn{1}{c}{5} & \multicolumn{1}{c}{6} & \multicolumn{1}{c}{7} & 8 & ... & \multicolumn{1}{c}{42} \\
                   \hhline{~----------}
                  1 & 1 & 0 & 1 & 0 & 0 & 1 & 1 & 0 & ... & 0 \\
                  2 & 1 & 1 & 1 & 0 & 0 & 0 & 0 & 0 & ... & 0 \\
                  3 & 1 & 0 & 1 & 0 & 1 & 0 & 1 & 0 & ... & 0 \\
                  4 & 1 & 1 & 0 & 1 & 1 & 0 & 1 & 0 & ... & 0 \\
                  5 & 1 & 1 & 1 & 1 & 1 & 0 & 0 & 0 & ... & 0 \\
                  6 & 0 & 0 & 0 & 0 & 0 & 0 & 0 & 1 & ... & 1 \\
          \vdots &    &   &   & \vdots  &   &   &  & \vdots & & \vdots\\
                40 & 0 & 0 & 0 & 0 & 0 & 0 & 0 & 1 & ... & 1 \\
                   \hhline{~----------}
        \end{tabular}
        \caption{Extended matrix example}
\label{ext1}
\end{table}

\begin{prop}
If the machine-part matrix for the CFP 2 problem has identical rows then there will be optimal solutions in which these rows belong to the same cell.
\end{prop}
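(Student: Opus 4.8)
The plan is to use an exchange argument: start from an arbitrary optimal solution of the CFP~2 instance and show that if two identical rows $i$ and $i'$ lie in different cells, then moving one of them into the cell of the other does not decrease the grouping efficacy, so we may repeat this until all copies of the identical row are together. First I would fix an optimal solution with cells $S_1,\ldots,S_c$, and suppose machine $i \in S_k$ and machine $i' \in S_\ell$ with $k \neq \ell$ and rows $a_{i\cdot} = a_{i'\cdot}$. Let $e_1, v_1$ be the contributions of row $i$ to the exception and void counts in its current cell $S_k$ (i.e.\ $e_1$ is the number of ones of row $i$ in columns \emph{not} assigned to $S_k$, and $v_1$ the number of zeroes of row $i$ in columns assigned to $S_k$), and let $e_2, v_2$ be what the contributions of row $i$ \emph{would} be if it were moved to cell $S_\ell$. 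Because rows $i$ and $i'$ are identical, $e_2, v_2$ are exactly the exception/void contributions that row $i'$ is \emph{already} making inside $S_\ell$.

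Next I would compare the two options. Moving $i$ from $S_k$ to $S_\ell$ changes the global counts by $e \mapsto e - e_1 + e_2$ and $v \mapsto v - v_1 + v_2$; equivalently, keeping $i$ where it is means row $i$'s pair of contributions is $(e_1, v_1)$ and moving it means the pair is $(e_2, v_2)$, while everything outside these two cells is untouched. Write $n_1$ for the number of ones of the whole matrix and let $E, V$ denote the exception/void counts coming from all rows other than $i$ (these are fixed throughout the argument). Then the efficacy is $\frac{n_1 - (E + e_1)}{n_1 + (V + v_1)}$ in the first case and $\frac{n_1 - (E + e_2)}{n_1 + (V + v_2)}$ in the second. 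Since the solution we started with is optimal, the first value is at least the second; but the instance is symmetric under swapping the labels $i \leftrightarrow i'$ together with swapping which of the two copies sits in $S_k$ versus $S_\ell$, so by the same optimality reasoning applied to the solution obtained by instead moving $i'$ from $S_\ell$ to $S_k$, the "both copies in $S_\ell$" configuration and the "both copies in $S_k$" configuration each have efficacy at least that of the original split configuration. Hence at least one of the two merged configurations is also optimal.

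Finally I would iterate: each merge strictly decreases the number of cells containing a copy of this identical row (or keeps it the same while not changing the multiset of rows in any cell, in which case we simply pick the merge that does decrease it), so after finitely many steps we reach an optimal solution in which all the identical rows occupy a single cell, which is the claim. The main obstacle to watch is the monotonicity comparison of the two fractions $\frac{n_1 - (E+e_1)}{n_1+(V+v_1)}$ vs.\ $\frac{n_1 - (E+e_2)}{n_1+(V+v_2)}$: one must be careful that this is driven purely by the optimality hypothesis rather than by a naive claim that smaller $(e,v)$ is always better, since the fraction is not monotone in $v$ in general. The clean way around this is exactly the symmetry observation above — we never need to decide which of $(e_1,v_1)$, $(e_2,v_2)$ is "better"; optimality of the starting solution, applied to both candidate moves, forces one of the merged solutions to match the optimum. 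A minor additional point is the degenerate case where $S_k$ or $S_\ell$ becomes empty after the move, which is harmless because the number of available cell labels is $c=\min(m,p)$ and an empty cell contributes nothing to $e$ or $v$.
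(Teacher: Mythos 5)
Your accounting is fine (moving a machine to another cell changes only that row's exception/void contribution, so the split solution has value $\frac{n_1-e-e_1-e_2}{n_1+v+v_1+v_2}$ while the two merged solutions have values $\frac{n_1-e-2e_1}{n_1+v+2v_1}$ and $\frac{n_1-e-2e_2}{n_1+v+2v_2}$, where $e,v$ count the exceptions and voids of all rows other than the two identical ones), but the decisive step of your argument is asserted in the wrong direction. Optimality of the split solution yields ``split $\ge$ both-copies-in-$S_k$'' and ``split $\ge$ both-copies-in-$S_\ell$''; it cannot yield your claim that the two merged configurations ``each have efficacy at least that of the original split configuration'' --- that is the reverse inequality and is precisely what has to be proved. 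From $a\ge b$ and $a\ge c$ one cannot conclude $\max(b,c)\ge a$, so your conclusion that optimality ``forces one of the merged solutions to match the optimum'' is unsupported, and the symmetry between the two identical rows does not supply the missing inequality either.

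What is missing is the algebraic fact that carries the paper's proof: cross-multiplying (all denominators are positive once $n_1\ge 1$) one checks that
\begin{equation*}
\frac{n_1-e-2e_1}{n_1+v+2v_1} \ \ge\ \frac{n_1-e-2e_2}{n_1+v+2v_2}
\quad\Leftrightarrow\quad
\frac{n_1-e-2e_1}{n_1+v+2v_1} \ \ge\ \frac{n_1-e-e_1-e_2}{n_1+v+v_1+v_2},
\end{equation*}
because both inequalities reduce to the same expression $v_2(n_1-e)-e_1(n_1+v)-2e_1v_2 \ \ge\ v_1(n_1-e)-e_2(n_1+v)-2e_2v_1$. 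Hence the better of the two merges is at least as good as the split configuration \emph{regardless} of any optimality assumption; applied to an optimal split solution it produces an optimal solution with the two identical rows in one cell, and your iteration argument then finishes the proof. This mediant-type identity is specific to the fractional objective $f_2$ (it is exactly the non-monotonicity issue you flagged), and without it your exchange argument does not close.
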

\begin{proof}
Let us assume that there are two identical rows which belong to different cells in an optimal solution, the first of these rows has $e_1$ exceptions (ones outside its cell) and $v_1$ voids (zeroes inside its cell), the second row has $e_2$ exceptions and $v_2$ voids, and all other rows in this solution have in total $e$ exceptions and $v$ voids.
Then the objective function value for this solution is the following.
\[
f = \frac{n_1 - e - e_1 - e_2}{n_1 + v + v_1 + v_2}
\]

If we move the second of the identical rows to the cell of the first one then these two rows will have $2e_1$ exceptions and $2v_1$ voids.
Otherwise, if we move the first row to the cell of the second one, we will get $2e_2$ exceptions and $2v_2$ voids.
Without loss of generality we can assume that joining the identical rows in the cell of the first of them gives the value of the grouping efficacy not smaller than we get in the opposite variant:
\begin{eqnarray*}
\frac{n_1 - e - 2e_1}{n_1 + v + 2v_1} \ge \frac{n_1 - e - 2e_2}{n_1 + v + 2v_2} \quad \Leftrightarrow \\
v_2(n_1 - e) - e_1(n_1 + v) - 2 e_1 v_2  \ge v_1(n_1 - e) - e_2(n_1 + v) - 2 e_2 v_1
\end{eqnarray*}

Now we will prove that the first variant of joining the identical rows gives the objective function value not worse than the original optimal solution has.
We need to prove the following.
\begin{eqnarray*}
\frac{n_1 - e - 2e_1}{n_1 + v + 2v_1} \ge \frac{n_1 - e - e_1 - e_2}{n_1 + v + v_1 + v_2} \quad \Leftrightarrow \\
(v_1 + v_2)(n_1 - e) - 2e_1(n_1 + v) - 2 e_1 (v_1 + v_2)  \ge 2v_1(n_1 - e) - (e_1 + e_2)(n_1 + v) - 2v_1(e_1 + e_2) \quad \Leftrightarrow \\
v_2(n_1 - e) - e_1(n_1 + v) - 2 e_1 v_2  \ge v_1(n_1 - e) - e_2(n_1 + v) - 2 e_2 v_1
\end{eqnarray*}
The last line exactly coincides with the expression we have obtained above from our assumption that the first variant of joining the identical rows is not worse than the second one.
Thus the solution with the joined rows is also optimal.
\end{proof}

The next proposition determines how much ones it is enough to add in the extended matrix in order to nullify the influence of $f_2$ denominator.

\begin{prop}
If the number of added ones $\Delta n_1$ in the extended matrix $\tilde{A}$ is equal to $(mp)^2$ then the maximum of $f_2$ on $\tilde{A}$ is obtained at the same solution (extended with the cell of added ones) at which $f_1$ has its minimum on matrix $A$.
\end{prop}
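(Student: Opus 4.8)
The plan is to prove two facts. \emph{Structural fact:} in every optimal solution of $f_2$ on $\tilde A$ (up to the freedom granted by Proposition~1) the whole $mp\times mp$ block of added ones is one separate cell $K_0$, while the original machines and parts are partitioned among the remaining cells exactly as in some feasible CFP solution of $A$. \emph{Monotonicity fact:} among all such ``structured'' solutions $f_2$ on $\tilde A$ is largest precisely when the induced partition of $A$ has the smallest value of $f_1=e+v$. Throughout I would write $N=n_1+(mp)^2$ for the number of ones in $\tilde A$, fix an $f_1$-optimal partition of $A$ with $e^{\ast}$ exceptions and $v^{\ast}$ voids, and set $f_1^{\ast}=e^{\ast}+v^{\ast}$; I will use only the trivial bounds $e\le n_1$, $v\le mp-n_1$, $e+v\le mp$ valid for any CFP solution of $A$, together with $f_1^{\ast}\le mp-n_1$ (attained by the one-cell solution, for which $e=0$).

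For the structural fact I would first apply Proposition~1 to the $mp$ mutually identical added rows and its column analogue (valid by transposing the matrix) to the $mp$ mutually identical added columns, obtaining an $f_2$-optimal solution of $\tilde A$ in which all added rows lie in one cell $R$ and all added columns in one cell $C$. If $R\ne C$, then $R$ contains no added column, so all $(mp)^2$ ones of the added rows become exceptions and $f_2\le (N-(mp)^2)/N=n_1/N$; this is strictly below the value $(N-e^{\ast})/(N+v^{\ast})$ of the reference solution that isolates the block and partitions $A$ optimally for $f_1$, contradicting optimality. Hence $R=C=:K_0$. If $K_0$ contained an original row or column, its intersection with the $mp$ added columns (resp.\ rows) would be all zeros, forcing $\tilde v\ge mp$ and hence $f_2\le N/(N+mp)$; a short computation using $f_1^{\ast}\le mp-n_1$ and $e^{\ast}\le n_1$ shows $N/(N+mp)<(N-e^{\ast})/(N+v^{\ast})$ as well, again a contradiction. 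So the optimum is structured. For a structured solution the exceptions and voids of $\tilde A$ are exactly those of the induced partition of $A$, since the block contributes neither exceptions nor voids and the two all-zero ``cross'' blocks of $\tilde A$ lie inside no cell; hence $f_2=(N-e)/(N+v)$ with $(e,v)$ the counts of that partition.

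For the monotonicity fact I would compare two structured solutions with counts $(e_1,v_1)$ and $(e_2,v_2)$. Clearing denominators, $f_2^{(1)}>f_2^{(2)}$ is equivalent to
\[
N\bigl[(e_2+v_2)-(e_1+v_1)\bigr] > e_1v_2-e_2v_1 .
\]
If the first partition is strictly better for $f_1$, i.e.\ $e_1+v_1<e_2+v_2$, the left-hand side is at least $N$, while the right-hand side is at most $e_1v_2\le n_1(mp-n_1)<n_1+(mp)^2=N$; so the inequality holds. Thus every structured solution whose $A$-part is $f_1$-optimal strictly beats every structured solution with a larger $f_1$-value, and among the $f_1$-optimal ones, all satisfying $e+v=f_1^{\ast}$, the value $f_2=1-f_1^{\ast}/(N+v)$ is largest for the one with the most voids. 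Together with the structural fact this says exactly that the maximum of $f_2$ on $\tilde A$ is attained at an $f_1$-minimiser of $A$ extended by the cell of added ones.

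The step I expect to be the real obstacle is ruling out original rows or columns inside $K_0$: there the competing reference solution is only ``nearly optimal'' rather than exactly optimal, so the estimate must be carried out carefully, and this is precisely where the choice $\Delta n_1=(mp)^2$ — rather than merely ``a large number'' — is used. Apart from a couple of trivial degenerate instances ($A$ a single entry, or $A$ all zeros), all the inequalities needed hold as stated, and all remaining manipulations are of the elementary fractional kind already seen in the proof of Proposition~1.
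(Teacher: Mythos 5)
Your proposal is correct, and its core mechanism is the same as the paper's: with $\Delta n_1=(mp)^2$ the number of added ones dominates every cross term, so a strict (hence at least $1$, by integrality) decrease in $f_1=e+v$ forces a strict increase in $f_2$; you organize this by cross-multiplying $(N-e_1)(N+v_2)>(N-e_2)(N+v_1)$ and bounding $e_1v_2-e_2v_1\le n_1(mp-n_1)<N$, while the paper writes the same comparison as $f_1'<f_1+f_1\frac{v'-v}{\tilde n_1+v}$ and bounds $f_1\frac{v-v'}{\tilde n_1+v}\le\frac{(mp)^2}{(mp)^2+1}<1$ — a cosmetic difference only. Where you genuinely go beyond the paper is the structural step: the paper disposes of it in one sentence by citing the identical-rows proposition, whereas you also invoke the column analogue, rule out the added rows and added columns landing in different cells (all $(mp)^2$ block ones would become exceptions), and rule out original rows or columns joining the block cell (at least $mp$ extra voids), each time comparing against the reference solution that isolates the block; these estimates are exactly the details the paper leaves implicit, and your version is the more complete argument. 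The only caveats are minor and shared with the paper: extending the two-identical-rows proposition to gathering all $mp$ added rows (and columns) into one cell needs a short iteration/exchange argument, and among $f_1$-minimizers $f_2$ picks the one with most voids, which is consistent with the proposition's ``the same solution'' phrasing read as ``some $f_1$-optimal solution.''
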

\begin{proof}
According to Proposition 2 the optimal solution for CFP 2 on the extended matrix has the added block of ones as a separate cell.
This means that this block adds no voids or exceptions to the solution and thus a CFP 1 solution and the corresponding CFP 2 solution (obtained by adding the block of ones as an additional cell) have the same number of voids $v$ and exceptions $e$.
We will now prove that if $\Delta n_1 = (mp)^2$ then for any two CFP 1 solutions with objective function values $f_1$ and $f'_1$ and the correspoding CFP 2 solutions with objective function values $f_2$ and $f'_2$ from $f'_1 < f_1$ it follows that $f'_2 > f_2$.
\begin{eqnarray*}
f'_2 = \frac{\tilde{n}_1 - e'}{\tilde{n}_1 + v'}, \quad f_2 = \frac{\tilde{n}_1 - e}{\tilde{n}_1 + v}, \quad f'_1 = e' + v', \quad f_1 = e + v \\
f'_2 > f_2 \quad \Leftrightarrow \quad \frac{\tilde{n}_1 - e'}{\tilde{n}_1 + v'} > \frac{\tilde{n}_1 - e}{\tilde{n}_1 + v} \quad \Leftrightarrow \quad \frac{e' + v'}{\tilde{n}_1 + v'} < \frac{e + v}{\tilde{n}_1 + v} \quad \Leftrightarrow \quad \\
\frac{f'_1}{\tilde{n}_1 + v'} < \frac{f_1}{\tilde{n}_1 + v} \quad \Leftrightarrow \quad f'_1 < f_1 \frac{\tilde{n}_1 + v'}{\tilde{n}_1 + v} \quad \Leftrightarrow \quad f'_1 < f_1 + f_1 \frac{v' - v}{\tilde{n}_1 + v}
\end{eqnarray*}
Note that in case $n_1 = 0$ the CFP 1 problem becomes trivial and so we consider only the case $n_1 \ge 1$.
Since $f_1 \le mp, v - v' \le mp$ for $\Delta n_1 = (mp)^2$ we have:
\begin{equation*}
f_1 \frac{v - v'}{\tilde{n}_1 + v} = f_1 \frac{v - v'}{n_1 + \Delta n_1 + v} \le \frac{(mp)^2}{(mp)^2 + 1} 
\quad \Leftrightarrow \quad
f_1 \frac{v' - v}{\tilde{n}_1 + v} \ge \frac{-(mp)^2}{(mp)^2 + 1}
\end{equation*}
From this it follows that:
\begin{equation*}
f_1 + f_1 \frac{v' - v}{\tilde{n}_1 + v} \ge f_1 - \frac{(mp)^2}{(mp)^2 + 1} > f_1 - 1
\end{equation*}
Since $f'_1$ and $f_1$ are integer, then $f'_1 < f_1$ is equivalent to $f'_1 \le f_1 - 1$.
Thus we have:
\begin{equation*}
f_1 + f_1 \frac{v' - v}{\tilde{n}_1 + v} > f_1 - 1 \ge f'_1 \quad \Rightarrow \quad f'_1 < f_1 + f_1 \frac{v' - v}{\tilde{n}_1 + v}  \quad \Leftrightarrow \quad  f'_2 > f_2
\end{equation*}
So we get that from $f'_1 < f_1$ it follows that $f'_2 > f_2$.
This means that the minimum value of $f_1$ gives the maximum of $f_2$ on the "extended" solution.
\end{proof}

\begin{theorem}
The CFP with grouping efficacy objective $f_2 = (n_1 - e) / (n_1 + v)$ (CFP 2) is NP-complete because CFP 1 can be polynomially reduced to it.
\end{theorem}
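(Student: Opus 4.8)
The plan is to show that CFP 2 lies in NP and then that CFP 1 reduces to it in polynomial time, treating Propositions 1--3 as black boxes. Membership in NP is immediate: given values of the variables $x_{ik},y_{jk}$ one computes $e$, $v$ and $f_2=(n_1-e)/(n_1+v)$ in polynomial time and checks whether $f_2\ge c$.

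For the reduction I would start from a decision instance of CFP 1 --- an $m\times p$ Boolean matrix $A$ with $n_1\ge 1$ ones together with an integer $k$, asking whether some partition achieves $f_1=e+v\le k$ --- and build the extended matrix $\tilde A$ exactly as in Table \ref{ext}: adjoin $mp$ new rows and $mp$ new columns, fill the entire block formed by the new rows and the new columns with ones, and leave every other new entry $0$. Then $\Delta n_1=(mp)^2$, $\tilde n_1=n_1+(mp)^2$, and $\tilde A$ has size polynomial in that of $A$. The output is the CFP 2 instance $(\tilde A,c)$ with threshold $c=(\tilde n_1-k)/\tilde n_1$, which is clearly computable in polynomial time.

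Correctness has two directions. For the forward direction I would take a CFP 1 solution on $A$ with $e+v\le k$ and adjoin the ones-block as one additional cell; since that block is square and completely filled while the remaining new entries are $0$, this creates neither exceptions nor voids, so it is a feasible CFP 2 solution on $\tilde A$ with the same $e$ and $v$, and clearing denominators together with $k\ge e+v\ge0$ gives $f_2=(\tilde n_1-e)/(\tilde n_1+v)\ge(\tilde n_1-k)/\tilde n_1=c$. For the backward direction, if some CFP 2 solution on $\tilde A$ satisfies $f_2\ge c$ then $f_2^{\max}\ge c$; by Propositions 2 and 3 there is an optimal CFP 2 solution on $\tilde A$ in which the ones-block is a separate cell and which, after that cell is deleted, restricts to a solution $S^*$ attaining $\min f_1$ on $A$, with $f_2^{\max}=(\tilde n_1-e^*)/(\tilde n_1+v^*)$ and $e^*+v^*=\min f_1$. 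Assuming for contradiction that $\min f_1\ge k+1$, the single-cell solution (all machines and all parts in one cell) shows $\min f_1=mp-n_1\le mp$, so $k\le mp-1$, and hence $kv^*\le(mp-1)\,mp<(mp)^2\le\tilde n_1$ (using $v^*\le mp$, since voids cannot exceed the total cell area, which is at most $mp$); substituting this into the same cross-multiplication forces $f_2^{\max}<c$, contradicting $f_2^{\max}\ge c$. Therefore $\min f_1\le k$, so $(A,k)$ is a yes-instance, and the two problems' answers agree.

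The step I expect to be the main obstacle is choosing the threshold $c$ and verifying the backward direction, because $f_2$ is sensitive to the individual values of $e$ and $v$ whereas $f_1$ sees only their sum, so two CFP 1 solutions of equal cost may extend to CFP 2 solutions with different $f_2$. The device that makes the argument go through is exactly the one already built into Proposition 3: inflating the number of ones to $n_1+(mp)^2$ forces the distortion term $kv^*/(\tilde n_1+v^*)$ strictly below $1$, while integrality of $f_1$ guarantees a gap of at least $1$ between the costs of a yes- and a no-answer, and together these make the coarse comparison between ``$f_1\le k$'' and ``$f_2\ge c$'' exact. A secondary point requiring care is that Proposition 2 is stated only for identical rows, so establishing the ``separate block'' claim needs its transpose for the identical new columns as well, in order to conclude that the entire $(mp)^2$-block lies inside a single cell of the optimal CFP 2 solution.
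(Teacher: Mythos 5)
Your proposal is correct and follows essentially the same route as the paper: the identical extended-matrix reduction with a $(mp)\times(mp)$ block of ones ($\Delta n_1=(mp)^2$), the same threshold $\tilde c = 1 - c/\tilde n_1 = (\tilde n_1 - c)/\tilde n_1$, the same forward direction via the extra all-ones cell, and the same backward direction resting on Propositions 2--3, the bounds $v\le mp$, $c<mp$, and integrality of $f_1$. The only differences are presentational --- you phrase the backward direction as a contradiction through the optimal $f_2$ value and explicitly note that a column analogue of Proposition 2 is needed to isolate the ones-block, a step the paper leaves implicit.
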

\begin{proof}
We will prove that CFP 1, which answers the question, whether there exists a solution with $f_1 = e + v \le c$ with the input matrix $A$ can be polynomially reduced to problem CFP 2 on the extended matrix $\tilde{A}$ (see Table \ref{ext}), which answers the question, whether there exists a solution with $f_2 = (n_1 - e) / (n_1 + v) \ge \tilde{c}$.
Here constant $\tilde{c}$ can depend on constant $c$ and other input parameters.

According to Proposition 3 to get a better solution for CFP 2 we should simply extend the solution for CFP 1 with an additional cell represented by the added block of ones in the extended matrix $\tilde{A}$ (see Table \ref{ext}).
It is clear that for the considered decision version of CFP 2 we should also take the best possible solution with maximal value of $f_2$ to guarantee the satisfaction of inequality $f_2 \ge \tilde{c}$.
Thus the solution for CFP 1 and the corresponding suggested solution for CFP 2 are connected in the following way.
\begin{equation*}
\tilde{e} = e, \quad \tilde{v} = v, \quad \tilde{n}_1 = n_1 + \Delta n_1
\end{equation*}
\begin{equation*}
f_2 = \frac{\tilde{n}_1 - \tilde{e}}{\tilde{n}_1 + \tilde{v}} = \frac{\tilde{n}_1 - e}{\tilde{n}_1 + v} = 1 - \frac{e + v}{\tilde{n}_1 + v} = 1 - \frac{f_1}{\tilde{n}_1 + v}
\end{equation*}
Let us find a value of $\tilde{c}$ such, that for a CFP 1 solution with $f_1 \le c$ the corresponding solution for CFP 2 will have $f_2 \ge \tilde{c}$.
\begin{equation*}
f_1 \le c \quad \Rightarrow \quad f_2 = 1 - \frac{f_1}{\tilde{n}_1 + v} \ge 1 - \frac{c}{\tilde{n}_1}
\end{equation*}
So for $\tilde{c} = 1 - c / \tilde{n}_1$ we have $f_1 \le c \; \Rightarrow \; f_2 \ge \tilde{c}$.
This guarantees that if there are no solutions for CFP 2 with $f_2 \ge \tilde{c}$ then there exist no solutions for CFP 1 with $f_1 \le c$.

Now let us prove that for this $\tilde{c}$ from $f_2 \ge \tilde{c}$ for CFP 2 solution it follows that the original CFP 1 solution has $f_1 \le c$.
We have:
\begin{eqnarray*}
f_2 = \frac{\tilde{n}_1 - e}{\tilde{n}_1 + v} = 1 - \frac{e + v}{\tilde{n}_1 + v} \ge 1 - c / \tilde{n}_1 \quad \Leftrightarrow \\
\frac{e + v}{\tilde{n}_1 + v} \le c / \tilde{n}_1 \quad \Leftrightarrow \quad e + v \le c + \frac{cv}{\tilde{n}_1}
\end{eqnarray*}
Now we can use the fact that the number of added ones is $\Delta n_1 = (mp)^2$, and so $\tilde{n}_1 = n_1 + (mp)^2$.
We also note that the cases when $n_1 = 0$ or $c \ge mp$ are trivial, because in such cases we do not need to construct any CFP 2 instance and can immediately answer the CFP 1 question.
Since $c < mp$ and $v \le mp$ we have.
\begin{eqnarray*}
e + v \le c + \frac{cv}{\tilde{n}_1} < c + \frac{(mp)^2}{n_1 + (mp)^2} < c + 1
\end{eqnarray*}
Since $e + v$ is integer we can conclude that $f_1 = e + v \le c$.

Thus we have found the value of $\tilde{c} = 1 - c / \tilde{n}_1$ such that the answer for any CFP 1 instance on the question, whether there exists a solution with $f_1 \le c$, is "yes", if and only if the answer for the corresponding CFP 2 instance on the question, whether there exists a solution with $f_2 \ge \tilde{c}$, is also "yes".
Consequently, the answer to the CFP 1 question is "no", if and only if, the answer to the CFP 2 question is also "no".
This proves that CFP 2 is at least as hard as the NP-complete problem CFP 1.
It is also clear that CFP 2 belongs to class NP, because any "yes"-solution can be verified in polynomial time.
This proves that CFP 2 is an NP-complete problem.
\end{proof}

\section*{Funding}
Sections 1, 2 and Theorems 1, 2 in Section 3 were prepared within the framework of the Basic Research Program at the National Research University Higher School of Economics (NRU HSE).
Propositions 1, 2 and Theorem 3 in Section 3 were formulated and proved with the support of RSF grant 14-41-00039.

\end{document}